\documentclass[11pt]{article}

\usepackage[margin=1.08in]{geometry}
\usepackage{amsmath,amssymb,amsthm,mathtools}
\usepackage{booktabs,array}
\usepackage{microtype}
\usepackage{setspace}
\usepackage{natbib}
\usepackage{xcolor}
\usepackage[colorlinks=true,allcolors=blue!55!black]{hyperref}
\hypersetup{
  pdftitle={The Reverse Big Push: Generative AI and Self-Fulfilling Automation},
  pdfauthor={Soumen Banerjee and Jianguo Wang},
  pdfsubject={Generative AI, automation, aggregate demand, and coordination},
  pdfkeywords={generative AI, automation, aggregate demand, coordination failure, multiple equilibria}
}

\newtheorem{proposition}{Proposition}
\newtheorem{corollary}{Corollary}

\theoremstyle{definition}
\newtheorem{definition}{Definition}
\theoremstyle{remark}

\newcommand{\HH}{\mathrm{H}}
\newcommand{\Atech}{\mathrm{A}}
\newcommand{\profit}{\pi}
\newcommand{\E}{\mathbb{E}}

\title{The Reverse Big Push:\\Generative AI and Self-Fulfilling Automation}
\author{%
Soumen Banerjee\\
\small Central University of Finance and Economics
\and
Jianguo Wang\\
\small Renmin University of China}
\date{August 2026}

\begin{document}

\maketitle

\begin{abstract}
Generative AI relocates the fixed cost of automation. A model provider pays to train a frontier system, while a downstream firm rents capability by usage; the same firm must carry a continuing payroll to supply a human-augmented service. We study this asymmetry in a local service economy with household budgets and a market-clearing wage. Human augmentation earns a larger surplus from an additional customer, whereas automation has the lower break-even scale. Payroll supports demand across sectors. Wage adjustment works against this feedback but does not generally undo it: when the wage-income effect dominates the fall in the wage bill per retained worker, production modes are strategic complements. The economy can then possess both a high-demand human-augmented equilibrium and a low-demand automated equilibrium. The former is the local first best, even when flexible wages prevent a firm-profit ranking. With forward-looking firms and staggered revision opportunities, the same inherited employment structure can support an automation cascade or an augmentation recovery; the anticipated path of later adopters validates the first movers' choices. Under the regularity and boundary conditions of Frankel and Pauzner, a public aggregate shock selects a unique state-contingent path, while the vanishing-friction limit selects according to risk dominance. A transparent parameterization anchored to professional-services revenue-to-payroll ratios illustrates how high-autonomy uses can enter the coordination region and how wage adjustment compresses that region. Optimal policy combines the adoption wedge created by demand spillovers with a temporary bridge when the low state is locally self-sustaining.
\end{abstract}

\noindent\textbf{Keywords:} generative AI, automation, aggregate demand, coordination failure, multiple equilibria, local labor markets.\\
\textbf{JEL codes:} E23, E24, O14, O33, R23.

\section{Introduction}

The fixed cost that matters for modern generative AI is often not the one paid by its downstream user. Frontier-model developers finance training clusters, data, and engineering, then sell inference through application programming interfaces. A law office, design studio, software contractor, or customer-service provider can therefore purchase automated capability as orders arrive and stop purchasing it when orders disappear. A human-augmented product line is different. It requires a team whose payroll continues through fluctuations in demand. Upstream training cost may be enormous, but from the adopter's perspective automation is increasingly a metered operating expense while augmentation remains an organizational commitment.

That inversion is consequential for equilibrium selection. In the classic big push, a high-fixed-cost technology creates wages that expand the markets of other modern producers \citep{murphy1989industrialization}. In the setting studied here, the scalable technology is already available upstream. Downstream firms decide whether to combine it with a human team or use it to replace part of that team. A firm that retains workers bears the wage bill itself, whereas the resulting expenditure is spread across other sectors. If many firms automate, the market becomes too small to cover the teams that remain. Low demand then makes automation a defensive response to an environment created by automation itself. Unlike a demand externality that leaves automation strictly dominant, this payroll feedback reverses the private ranking of automation and augmentation across aggregate employment states, producing a coordination game with self-fulfilling transitions.

The central result does not require wages to remain fixed after employment contracts. Specialized labor is supplied along an upward-sloping participation curve, so the wage falls as firms automate. This is the standard stabilizing force: a lower wage makes rebuilding a team cheaper. The same wage movement, however, reduces the earnings spent in the local product market. The model yields a sharp condition comparing these two effects. When the demand generated by an additional wage dollar, multiplied by the human mode's incremental return to market size, exceeds the reduction in its relative wage bill, the gain from human augmentation rises with the number of other augmenting firms. Both a high-demand human-augmented equilibrium and a low-demand automated equilibrium can then survive with fully flexible wages. Fixed wages are a limiting case of an elastic local labor supply, not the premise on which multiplicity rests.

The production asymmetry has two parts. Automation reduces the minimum local team, while human involvement raises the operating surplus generated by an additional customer through customization, verification, accountability, or the creation of a less commoditized service. The second inequality is indispensable. It describes what may be called \emph{defensive automation}: automation is attractive because it lowers the break-even scale, not because it earns the greatest return in an expanding market. Generative AI can also be offensive, in the sense of opening products for which the automated mode has the larger incremental surplus. The main mechanism applies to defensive service lines; a two-sector extension shows how offensive automation elsewhere can contract demand and trigger defensive automation in those lines.

Static multiplicity is only part of the argument. Firms receive opportunities to reorganize at different dates and look forward to the choices of later revisers. In a nonempty interval of inherited employment shares, pessimistic expectations support a path on which each revising firm automates and validates the next firm's decision, while optimistic expectations support team rebuilding from the same initial condition. History fixes the workforce inherited at the decision date; expectations determine the direction in which that workforce moves. This places the mechanism in the history-versus-expectations tradition of \citet{krugman1991history} and \citet{matsuyama1991increasing}, rather than treating a myopic basin of attraction as a belief-driven equilibrium.

Aggregate uncertainty disciplines the multiplicity. Following \citet{frankel2000resolving}, let a public profitability fundamental diffuse between regions in which either mode is dominant. Under their regularity and boundary conditions, the stochastic economy has a unique state-contingent transition path. History continues to enter the switching boundary when adjustment opportunities are scarce. In the vanishing-friction limit studied by \citet{burdzy2001fast}, the selected mode is risk dominant: human augmentation is selected when the static tipping point lies below one half, and automation is selected when it lies above one half. Improvements in unattended model capability can therefore change the selected equilibrium before they eliminate the human-augmented equilibrium.

The welfare and policy results follow from the same household accounting. Workers have heterogeneous participation costs and earn employment rents; buyers receive surplus from completed service transactions. In the coordination region, the human-augmented endpoint is the first best for the modeled local economy. An atomistic firm ignores both the buyer surplus associated with its mode and the additional transactions supported by the payroll it creates. A state-contingent Pigouvian wedge decentralizes the planner's marginal incentive. Because welfare is nonconcave over the adoption share, marginal correction need not dislodge the low local optimum. A temporary coordination bridge is then required. With forward-looking firms, crossing the myopic threshold is sufficient to coordinate on recovery but not to rule out pessimistic expectations; a robust bridge is retained until the inherited state lies beyond the largest state that can support an automation cascade.

A transparent numerical illustration locates the model's thresholds. Revenue-to-payroll ratios for U.S. professional services anchor market size relative to the team cost; local-employment multipliers discipline the composite demand feedback; economywide markup magnitudes motivate the difference in operating margins; and task-exposure estimates are translated into displacement through an explicit realization rate. In the central parameterization, low- and medium-autonomy uses leave augmentation uniquely optimal, a high-autonomy use lies inside the coordination region and near its risk-dominance boundary, and a still more autonomous use makes automation dominant. Wage adjustment compresses the interval throughout the range in which the maintained strategic complementarity continues to hold.

The mechanism belongs to a long literature on aggregate-demand complementarities and coordination \citep{diamond1982aggregate,blanchard1987monopolistic,cooper1988coordinating,matsuyama1995complementarities}. As in adoption models with network effects \citep{katz1986technology}, private returns depend on what other adopters do, but the connection here runs through payroll-financed product demand rather than technological compatibility. \citet{alvarez2026technology} study the adoption of a frontier technology with a fixed adoption cost and coordinated policy. Here the frontier model is already available to both modes: the downstream choice is between automation and augmentation, and payroll-financed demand changes their private ranking. Relative to the task-based analysis of automation and reinstatement in \citet{acemoglu2018race,acemoglu2019automation,acemoglu2020wrong}, the paper isolates a contemporaneous demand channel that determines which direction of AI deployment is privately viable.

The closest current work studies different failures. \citet{falk2026layoff} obtain excessive automation when each firm captures its labor saving but internalizes only a small share of the associated demand loss; automation is dominant in their baseline. That outcome is nested here as the region in which the payroll feedback is too weak to reverse private rankings. The central region instead has strategic complementarity and equilibrium selection. \citet{bayraktar2026automation} studies the incidence of automation, portfolios, and capital accumulation in incomplete markets, without the adoption multiplicity developed here. \citet{beraja2025inefficient} derive inefficient automation during slow worker reallocation and constrained household borrowing. \citet{unveren2023ai} obtain underinvestment in uncertain upstream AI research, partly through expected future demand, and \citet{espinalmaya2026augmentation} connects workplace organization to augmentable human capital. The present object is downstream deployment of an available general-purpose model: current layoffs shrink the market that makes labor-intensive use of the same model profitable.

\section{The economy}
\label{sec:model}

\subsection{Households, the local market, and wages}

There is a unit mass of differentiated service sectors and one incumbent in each sector. The modeled market may be a city, a region, or a set of nontradable services within a larger economy. Exports, expenditure by residents employed elsewhere, and other demand not generated by the adopting firms are collected in $\overline M>0$. The local interpretation is useful because remote model suppliers and geographically diversified owners can receive income without spending it contemporaneously in the market where displacement occurs.

The economy contains $\ell$ specialized workers. Worker $z\in[0,\ell]$ has money-metric participation cost $c(z)$, with $c$ continuously differentiable and increasing. A worker who does not participate takes home production or relocates outside the modeled market. Competitive market clearing therefore gives the inverse labor-supply schedule
\begin{equation}
    w=w(L)\equiv c(L), \qquad w'(L)\geq 0,
    \label{eq:wagecurve}
\end{equation}
where $L$ is local employment. This formulation allows the wage to fall when automation reduces labor demand. The real opportunity cost of local employment is
\begin{equation}
    C(L)=\int_0^L w(s)\,ds,
    \label{eq:laborcost}
\end{equation}
and workers' aggregate participation rent is $w(L)L-C(L)$.

An employed household allocates a share $\mu\in(0,1)$ of wage income to the local service composite and the remainder to an outside numeraire. Owners of the downstream firms and the remote model supplier spend marginal income on the outside numeraire in the baseline. With the local-service price index normalized to one, household budgets imply
\begin{equation}
    M=\overline M+\mu w(L)L.
    \label{eq:demand}
\end{equation}
Thus no income disappears from the economy: equation~\eqref{eq:demand} identifies which income enters the market whose size affects the adopting firms. Section~\ref{sec:extensions} allows owners to purchase the local composite and allows displaced workers to recover earnings elsewhere in the region.

\subsection{Generative-AI production modes}

Each firm uses the frontier model in one of two modes. A human-augmented firm, denoted $\HH$, maintains a minimum team of size $\ell$. An automated firm, denoted $\Atech$, retains $(1-d)\ell$ workers, where $d\in(0,1]$ is the share of the team displaced. Employment quantities are densities with respect to the unit mass of sectors: a set of sectors of measure $dx$ operating in mode $\HH$ employs $\ell\,dx$ workers. Thus the same symbol $\ell$ describes a firm's team and aggregate employment under universal augmentation.

Automated capability is purchased by usage. Token expenditure and other request-level inputs are included in variable operating cost, so mode $\Atech$ has no material downstream capacity fixed cost. The fixed cost of training the frontier model remains upstream and is recovered through the token price. Mode $\HH$ instead requires the organization to carry the wage bill $w\ell$ for its minimum team.

Let $M$ be the common demand index generated by equation~\eqref{eq:demand}. After a firm chooses price and request-level inputs, its flow profit is
\begin{align}
    \profit_{\HH}(M,w)&=b_{\HH}M-w\ell, \label{eq:profitH2}\\
    \profit_{\Atech}(M,w)&=b_{\Atech}M-w(1-d)\ell. \label{eq:profitA2}
\end{align}
For example, if a completed contract in mode $j$ has price $p_j$ and variable resource cost $c_j$, the coefficient $b_j$ contains the equilibrium margin $p_j-c_j$ and the mode's demand share. The linear form is exact when market size scales the measure of potential projects and is the first-order representation of standard monopolistic-competition demand around a symmetric allocation.

The main analysis concerns sectors satisfying
\begin{equation}
    \kappa\equiv b_{\HH}-b_{\Atech}>0.
    \label{ass:singlecrossing}
\end{equation}
Human involvement then earns more from an expansion of the market because it supports customization, verification, accountability, or a quality premium. Automation can still be privately attractive because it saves the payroll $d\ell w$. Condition~\eqref{ass:singlecrossing} is the single-crossing property that gives defensive automation its distinctive demand sensitivity.

Let $x\in[0,1]$ denote the measure of human-augmented firms. Aggregate employment, the wage bill, and market size are
\begin{align}
    L(x)&=\ell(1-d+dx), \label{eq:employment}\\
    W(x)&=w(L(x))L(x), \label{eq:wagebill}\\
    M(x)&=\overline M+\mu W(x). \label{eq:marketx}
\end{align}
An individual firm is atomistic and takes $M(x)$ and $w(L(x))$ as given. Its gain from human augmentation is
\begin{equation}
    G(x)=\profit_{\HH}(M(x),w(L(x)))-\profit_{\Atech}(M(x),w(L(x)))
    =\kappa M(x)-d\ell w(L(x)).
    \label{eq:gain2}
\end{equation}

Wage adjustment appears directly in the slope of this gain:
\begin{equation}
    G'(x)=d\ell\left\{\kappa\mu\bigl[w(L)+Lw'(L)\bigr]-d\ell w'(L)\right\}.
    \label{eq:Gprime}
\end{equation}
The first term is the increase in wage-financed demand created by a marginal expansion of employment. The second is the stabilizing effect of a higher wage on the relative cost of the labor-intensive mode. The maintained complementarity condition is
\begin{equation}
    \kappa\mu\bigl[w(L)+Lw'(L)\bigr]>d\ell w'(L)
    \quad\text{for all }L\in[\ell(1-d),\ell].
    \label{ass:complementarity}
\end{equation}
It makes $G$ strictly increasing without fixing the wage. The condition is automatically satisfied when local labor supply is highly elastic and remains satisfied with an upward-sloping wage curve whenever the wage-income channel is strong enough.

\begin{definition}
An equilibrium is a share $x\in[0,1]$ such that almost every firm chooses a profit-maximizing mode at the market size and wage generated by $x$.
\end{definition}

\section{Static equilibrium with wage adjustment}
\label{sec:static}

Write $L_{\Atech}=\ell(1-d)$ and $L_{\HH}=\ell$, and use subscripts $\Atech$ and $\HH$ for the corresponding wages and market sizes. The endpoint incentives are
\begin{equation}
    G_{\Atech}=\kappa M_{\Atech}-d\ell w_{\Atech},
    \qquad
    G_{\HH}=\kappa M_{\HH}-d\ell w_{\HH}.
    \label{eq:endpointgains}
\end{equation}

\begin{proposition}[Flexible-wage equilibrium classification]
\label{prop:flexclassification}
Suppose conditions~\eqref{ass:singlecrossing} and \eqref{ass:complementarity} hold. Apart from equality cases:
\begin{enumerate}
    \item if $G_{\Atech}>0$, human augmentation is the unique equilibrium;
    \item if $G_{\HH}<0$, automation is the unique equilibrium;
    \item if $G_{\Atech}<0<G_{\HH}$, both $x=0$ and $x=1$ are equilibria and there is a unique interior equilibrium $x^*\in(0,1)$ satisfying $G(x^*)=0$.
\end{enumerate}
Under myopic best-response adjustment, the endpoints in part 3 are locally stable and $x^*$ is unstable.
\end{proposition}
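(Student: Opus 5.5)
The argument reduces everything to the monotonicity of the scalar gain function $G$ on $[0,1]$. First I will verify that $G$ is continuous and strictly increasing. Since $w$ is continuously differentiable and $L(x)$ is affine, $G(x)=\kappa(\overline M+\mu w(L(x))L(x))-d\ell w(L(x))$ is $C^1$. Its derivative is given by equation~\eqref{eq:Gprime}. Because $L(x)\in[\ell(1-d),\ell]$ for $x\in[0,1]$ and $d\ell>0$, condition~\eqref{ass:complementarity} makes $G'(x)>0$ throughout. By construction $G(0)=G_{\Atech}$ and $G(1)=G_{\HH}$, since $L(0)=L_{\Atech}$ and $L(1)=L_{\HH}$. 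Monotonicity then gives $G_{\Atech}<G_{\HH}$, so the three cases in the statement are exhaustive apart from equality cases.

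Next I will characterize equilibria through the sign of $G$. Every firm faces the same $M(x)$ and $w(L(x))$, so a share $x$ is an equilibrium in three situations. The first is $x=1$ with $G(1)\ge0$. The second is $x=0$ with $G(0)\le 0$. The third is interior $x$ with $G(x)=0$, since otherwise almost every firm strictly prefers one mode and the share would have to be $0$ or $1$.

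The three cases then follow directly.
\begin{enumerate}
\item If $G_{\Atech}>0$, then $G>0$ on $[0,1]$. Only $x=1$ qualifies.
\item If $G_{\HH}<0$, then $G<0$ everywhere. Only $x=0$ qualifies.
\item If $G_{\Atech}<0<G_{\HH}$, both endpoints qualify. The intermediate value theorem yields a root $x^*\in(0,1)$, and strict monotonicity makes it unique.
\end{enumerate}

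For stability I will formalize myopic best-response adjustment as $\dot x=\Phi(G(x))$ for some $\Phi$ that is sign-preserving with $\Phi(0)=0$. The canonical choice is the best-response dynamic $\dot x = \mathrm{BR}(x)-x$, where $\mathrm{BR}=1$ if $G>0$ and $\mathrm{BR}=0$ if $G<0$. For the endpoints, $G<0$ on $[0,x^*)$, so trajectories starting there move monotonically to $0$; symmetrically, $G>0$ on $(x^*,1]$ sends trajectories to $1$. Hence each endpoint's basin contains a neighborhood of it, and both are locally stable. For the interior point, any perturbation off $x^*$ moves the state away from it, so $x^*$ is unstable.

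The only delicate step is choosing the adjustment formalism so that stability is well defined. The proof will fix the best-response dynamic just described and use only the sign of $G$. The remainder is routine, since condition~\eqref{ass:complementarity} directly supplies the strict monotonicity on which the whole argument rests.
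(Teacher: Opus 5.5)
Your proposal is correct and follows the paper's own proof essentially step for step: strict monotonicity of $G$ from equation~\eqref{eq:Gprime} and condition~\eqref{ass:complementarity}, classification of the endpoints by the sign of $G$, the intermediate value theorem plus monotonicity for the unique interior root, and stability read off the sign of $G$ on either side of $x^*$. Your dynamic $\dot x=\mathrm{BR}(x)-x$ is exactly the paper's Poisson-revision law~\eqref{eq:myopicdynamics} with $\rho$ normalized to one. You should drop the preliminary $\Phi(G(x))$ framing, because that dynamic depends on $x$ and vanishes at the endpoints, so it is not a sign-preserving function of $G$ alone.
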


The third region is the reverse big push. At the low-employment allocation, the local market is too small to compensate a deviating firm for rebuilding its team. At the high-employment allocation, the same differentiation return is evaluated against a larger customer base and covers the higher wage bill. A fall in wages makes augmentation cheaper in the low state, but the endpoint remains an equilibrium whenever that adjustment is smaller than the associated contraction in wage-financed demand.

The linear participation schedule makes the wage test especially transparent. Let
\begin{equation}
    w(L)=w_0+\gamma L, \qquad w_0>0,\quad \gamma\geq0.
    \label{eq:linearwage}
\end{equation}
Condition~\eqref{ass:complementarity} becomes
\begin{equation}
    \kappa\mu\bigl[w_0+2\gamma\ell(1-d)\bigr]>\gamma d\ell.
    \label{eq:linearSC}
\end{equation}
Because the left side of the bracket in equation~\eqref{eq:Gprime} rises with $L$, this inequality is necessary and sufficient for $G$ to increase over the entire feasible interval. For $\gamma>0$, define
\begin{equation}
    A=\kappa\mu\gamma,\qquad
    B=\kappa\mu w_0-d\gamma\ell,\qquad
    C=\kappa\overline M-d\ell w_0.
\end{equation}
In the coordination region the threshold is
\begin{equation}
    L^*=\frac{-B+\sqrt{B^2-4AC}}{2A},
    \qquad
    x^*=\frac{L^*/\ell-(1-d)}{d}.
    \label{eq:flexxstar}
\end{equation}

When $\gamma=0$, the local wage is pinned by a perfectly elastic supply of specialized labor. Let $K=w_0\ell$ and $q=M_{\HH}/K=\overline M/K+\mu$. The displacement cutoffs reduce to
\begin{equation}
    d_1=\frac{\kappa q}{1+\kappa\mu},
    \qquad
    d_2=\kappa q,
    \label{eq:rigidthresholds}
\end{equation}
with augmentation unique for $d<d_1$, coordination for $d_1<d<d_2$, and automation unique for $d>d_2$. This benchmark will provide closed forms for the dynamic and quantitative results, while Proposition~\ref{prop:flexclassification} establishes the mechanism with a market-clearing wage.

\subsection{Downstream profits}

Flexible wages change the cross-equilibrium profit comparison because an automated firm retains $(1-d)\ell$ workers and pays them the higher high-state wage. Let
\begin{equation}
    \Delta W=W_{\HH}-W_{\Atech}>0,
    \qquad
    \Delta w=w_{\HH}-w_{\Atech}\geq0.
\end{equation}
The automated firm's profit rises between the two market states if and only if
\begin{equation}
    b_{\Atech}\mu\Delta W>(1-d)\ell\Delta w.
    \label{ass:ranking}
\end{equation}

\begin{proposition}[Firm-payoff ranking]
\label{prop:firmranking}
Suppose the economy is in the coordination region and condition~\eqref{ass:ranking} holds. Then every downstream firm earns strictly more in the human-augmented equilibrium than in the automated equilibrium:
\begin{equation}
    \profit_{\HH}(M_{\HH},w_{\HH})>
    \profit_{\Atech}(M_{\Atech},w_{\Atech}).
    \label{eq:flexprofitranking}
\end{equation}
Condition~\eqref{ass:ranking} is automatic when $d=1$.
\end{proposition}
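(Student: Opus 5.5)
The argument is a two-step chain that passes through the automated firm's payoff in the high state. I will write
\[
\profit_{\HH}(M_{\HH},w_{\HH})-\profit_{\Atech}(M_{\Atech},w_{\Atech})
=\underbrace{\bigl[\profit_{\HH}(M_{\HH},w_{\HH})-\profit_{\Atech}(M_{\HH},w_{\HH})\bigr]}_{=G_{\HH}}
+\underbrace{\bigl[\profit_{\Atech}(M_{\HH},w_{\HH})-\profit_{\Atech}(M_{\Atech},w_{\Atech})\bigr]}_{\text{state effect on mode }\Atech}
\]
and sign each bracket separately.

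\textbf{First bracket.} By the definition of the gain in equation~\eqref{eq:gain2}, this term equals $G_{\HH}=G(1)$. In the coordination region, Proposition~\ref{prop:flexclassification} part 3 gives $G_{\HH}>0$, so the first bracket is strictly positive.

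\textbf{Second bracket.} Using equation~\eqref{eq:profitA2} together with $M_j=\overline M+\mu W_j$, this term equals
\[
b_{\Atech}\mu\,\Delta W-(1-d)\ell\,\Delta w .
\]
This is strictly positive exactly when condition~\eqref{ass:ranking} holds, which is assumed. Summing the two brackets gives inequality~\eqref{eq:flexprofitranking}.

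I also need $\Delta W>0$, which the paper asserts in its definition. It follows because $W(L)=w(L)L$ is strictly increasing whenever $w>0$ and $w'\ge 0$, and $L_{\HH}>L_{\Atech}$ since $d>0$. The required positivity of $w$ holds because $w$ is nondecreasing and the endpoint wage exceeds zero.

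\textbf{Case $d=1$.} Here the right side of condition~\eqref{ass:ranking} is $(1-d)\ell\Delta w=0$. The condition therefore reduces to $b_{\Atech}\mu\Delta W>0$, which requires $b_{\Atech}>0$ together with $\mu>0$ and $\Delta W>0$.

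Positivity of $b_{\Atech}$ is the one point needing care. It is not stated explicitly, and \eqref{ass:singlecrossing} only gives $b_{\HH}>b_{\Atech}$. I would argue from the interpretation that $b_{\Atech}$ collects a nonnegative margin times a demand share, and is strictly positive whenever mode $\Atech$ sells at all.

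If $b_{\Atech}=0$ were allowed, the second bracket would only be weakly positive. Strictness would still hold, since it comes from $G_{\HH}>0$ in the first bracket. So the $d=1$ claim goes through either way: if $b_{\Atech}=0$, the conclusion holds directly without condition~\eqref{ass:ranking}.

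This $d=1$ edge case, and whether $b_{\Atech}>0$ can be assumed, is the only obstacle I expect. The rest is direct substitution.
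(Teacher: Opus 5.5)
Your proof is correct and follows the paper's own argument: the high-state incentive constraint $G_{\HH}>0$ and the state effect $b_{\Atech}\mu\Delta W-(1-d)\ell\Delta w>0$ from condition~\eqref{ass:ranking} are chained by transitivity, and for $d=1$ the residual-wage term vanishes. Your observation that the $d=1$ remark tacitly needs $b_{\Atech}>0$ is a fair sharpening of the paper's one-line claim; note also that $\Delta W>0$ follows directly from condition~\eqref{ass:complementarity}, since $\kappa\mu>0$ and $w'\geq0$ force $w(L)+Lw'(L)>0$ on $[\ell(1-d),\ell]$, with no separate positivity assumption on the wage.
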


The proof remains a revealed-preference argument. At the high state, each firm voluntarily chooses augmentation. Condition~\eqref{ass:ranking} then ensures that the automated alternative itself earns more in the large market despite the higher wage. The result requires no presumption that a firm values the demand generated by its own payroll. Its role is narrower: it identifies when flexible wages preserve the Pareto ranking among downstream owners that obtains automatically if automation eliminates the entire local team.

\section{History and expectations}
\label{sec:dynamics}

\subsection{Myopic adjustment}

Suppose first that each firm receives an opportunity to revise its production plan at Poisson rate $\rho>0$ and maximizes current flow profit. In the coordination region, Proposition~\ref{prop:flexclassification} gives
\begin{equation}
    \dot x=
    \begin{cases}
        -\rho x, & x<x^*,\\[2pt]
        \rho(1-x), & x>x^*.
    \end{cases}
    \label{eq:myopicdynamics}
\end{equation}
The inherited production structure alone selects the destination: $x^*$ divides the basins of the two endpoints. This benchmark describes a sequential cascade, but it is not yet a model of self-fulfilling beliefs.

The threshold nevertheless clarifies how model capability initiates a transition. Holding the wage curve and the demand system fixed, an increase in the profitability of unattended automation raises $b_{\Atech}$ and lowers $\kappa$. The implicit-function theorem gives
\begin{equation}
    \frac{\partial x^*}{\partial\kappa}
    =-\frac{M(x^*)}{G'(x^*)}<0.
    \label{eq:kappastar}
\end{equation}
An improvement concentrated in mode $\Atech$ therefore raises $x^*$. If the new threshold passes the inherited share of human-augmented firms, subsequent revisions propagate the initial capability shock even after model quality stops changing.

\subsection{Forward-looking mode choice}

Firms now discount at rate $r>0$ and understand the aggregate law of motion. A firm can change modes only when its Poisson revision clock rings. Switching from $\HH$ to $\Atech$ entails the one-time reorganization cost $F_{\Atech}\geq0$; rebuilding a team entails $F_{\HH}\geq0$. These costs cover integration and severance on one side and recruitment, training, and lost organizational capital on the other. Between revision opportunities a firm's mode is locked in.

Define the effective cost of postponing a switch by
\begin{equation}
    \widehat F_j=\frac{r}{r+\rho}F_j.
    \label{eq:effectiveF}
\end{equation}
The factor $r/(r+\rho)$ appears because rejecting a switch now delays its cost until the next opportunity rather than avoiding it forever.

Consider two polar anticipated paths from an inherited state $x$. If all future revisers choose automation, then $x_t=xe^{-\rho t}$. If all future revisers choose augmentation, then $x_t=1-(1-x)e^{-\rho t}$. The expected flow advantage of postponing a switch along these paths is summarized by
\begin{align}
    \Phi_{\Atech}(x)
    &=\int_0^\infty e^{-(r+\rho)t}G(xe^{-\rho t})\,dt,
    \label{eq:PhiA}\\
    \Phi_{\HH}(x)
    &=\int_0^\infty e^{-(r+\rho)t}
      G\bigl(1-(1-x)e^{-\rho t}\bigr)\,dt.
    \label{eq:PhiH}
\end{align}
Both functions are strictly increasing under condition~\eqref{ass:complementarity}.

\begin{proposition}[Self-fulfilling transitions]
\label{prop:expectations}
Suppose the static economy is in the coordination region.
\begin{enumerate}
    \item The path $x_t=xe^{-\rho t}$ on which every future reviser chooses automation is a perfect-foresight equilibrium if and only if
    \begin{equation}
        \Phi_{\Atech}(x)\leq-\widehat F_{\Atech}.
        \label{eq:Apathcondition}
    \end{equation}
    \item The path $x_t=1-(1-x)e^{-\rho t}$ on which every future reviser chooses human augmentation is a perfect-foresight equilibrium if and only if
    \begin{equation}
        \Phi_{\HH}(x)\geq\widehat F_{\HH}.
        \label{eq:Hpathcondition}
    \end{equation}
\end{enumerate}
With zero switching costs and interior dominance boundaries, there are unique thresholds $x_{\HH}^{E}<x^*<x_{\Atech}^{E}$ satisfying $\Phi_{\HH}(x_{\HH}^{E})=0$ and $\Phi_{\Atech}(x_{\Atech}^{E})=0$. Every inherited state in $[x_{\HH}^{E},x_{\Atech}^{E}]$ supports both polar paths. The overlap remains nonempty for sufficiently small positive switching costs.
\end{proposition}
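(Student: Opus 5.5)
I would prove the proposition by a one-shot-deviation argument at a revision date, using the Poisson structure to turn the comparison of continuation values into the integrals $\Phi_{\Atech}$ and $\Phi_{\HH}$.

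\emph{Value comparison.} Fix an anticipated path $x_t$ and let $V_j(t)$ be the value of a firm locked in mode $j$ at date $t$, optimizing at future revision dates. Let $D(t)=V_{\HH}(t)-V_{\Atech}(t)$. On a candidate path where every reviser chooses $\Atech$, a firm in $\HH$ that reaches a revision switches and pays $F_{\Atech}$, while a firm in $\Atech$ stays. Conditioning on the first arrival of the revision clock gives
\begin{equation*}
D(t)=\int_t^\infty e^{-(r+\rho)(s-t)}\bigl[G(x_s)-\rho F_{\Atech}\bigr]\,ds
=\int_t^\infty e^{-(r+\rho)(s-t)}G(x_s)\,ds-\widehat F_{\Atech}.
\end{equation*}
The time-homogeneity of $x_s=xe^{-\rho s}$ turns the integral at $t$ into $\Phi_{\Atech}(x_t)$.

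\emph{Part 1.} The $\Atech$ path is an equilibrium if and only if, at every date, an $\HH$ firm weakly prefers switching, which requires $V_{\Atech}-F_{\Atech}\ge V_{\HH}$, that is, $D(t)\le -F_{\Atech}$. Here I must recheck the algebra so that the cost threshold comes out as $\widehat F_{\Atech}$: the comparison is between switching now and postponing to the next opportunity, and postponing delays the cost with expected discount factor $\rho/(r+\rho)$. This gives exactly $\Phi_{\Atech}(x_t)\le-\widehat F_{\Atech}$, and an $\Atech$ firm then also prefers staying.

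Since $\Phi_{\Atech}$ is increasing and $x_t$ falls along the path, the condition at $t=0$ implies it at all later dates. That gives both directions of the equivalence. Part 2 is symmetric: $x_t$ rises toward $1$, and $\Phi_{\HH}$ is increasing.

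\emph{Thresholds.} Under zero switching costs, strict monotonicity of $\Phi_j$ comes from $G'>0$, which holds under condition~\eqref{ass:complementarity}, together with continuity. At the endpoints:
\begin{itemize}
\item $\Phi_{\Atech}(0)=G_{\Atech}/(r+\rho)<0$.
\item $\Phi_{\HH}(1)=G_{\HH}/(r+\rho)>0$.
\end{itemize}
For interior roots we also need $\Phi_{\Atech}(1)>0$ and $\Phi_{\HH}(0)<0$. I read these as the ``interior dominance boundaries'' hypothesis and take them as assumed.

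The ordering of the thresholds follows from the paths. Along the $\Atech$ path from $x^*$, we have $xe^{-\rho t}<x^*$, so $G<0$ and $\Phi_{\Atech}(x^*)<0$. Monotonicity then gives $x^E_{\Atech}>x^*$. Symmetrically, $\Phi_{\HH}(x^*)>0$ gives $x^E_{\HH}<x^*$.

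On $[x^E_{\HH},x^E_{\Atech}]$ we have $\Phi_{\HH}\ge0$ and $\Phi_{\Atech}\le0$, so both polar paths are equilibria. With positive costs, the sets where the conditions hold are $[\Phi_{\HH}^{-1}(\widehat F_{\HH}),1]$ and $[0,\Phi_{\Atech}^{-1}(-\widehat F_{\Atech})]$. By continuity and strict monotonicity these bounds move continuously with the costs, and the overlap stays nonempty since $x^E_{\HH}<x^E_{\Atech}$.

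\emph{Main obstacle.} The hardest step is getting the cost term right, so that it appears as $\widehat F$ rather than $F$. That requires being careful about what the deviation is: declining the switch now while other firms follow the path and one's own future choices are re-optimized.
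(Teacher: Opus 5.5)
Your argument follows the paper's proof: a one-shot comparison at a Poisson revision date, monotonicity of $\Phi_j$ along the monotone path to extend the condition from $t=0$ to later dates, $\Phi_{\Atech}(x^*)<0<\Phi_{\HH}(x^*)$ from $G(x^*)=0$, and continuity in the switching costs.

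The step you flagged does need repair. In your display, $\int_t^\infty e^{-(r+\rho)(s-t)}\rho F_{\Atech}\,ds=\frac{\rho}{r+\rho}F_{\Atech}$, not $\widehat F_{\Atech}$. So the correct value is $D(t)=\Phi_{\Atech}(x_t)-\frac{\rho}{r+\rho}F_{\Atech}$. As written, combining your formula with $D(t)\le-F_{\Atech}$ would give the wrong threshold $\Phi_{\Atech}\le-\frac{\rho}{r+\rho}F_{\Atech}$. With the corrected value, $D(t)\le-F_{\Atech}$ becomes exactly $\Phi_{\Atech}(x_t)\le-F_{\Atech}+\frac{\rho}{r+\rho}F_{\Atech}=-\widehat F_{\Atech}$. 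This is the paper's comparison of retaining $\HH$ until $T$ versus switching now, worth $\Phi_{\Atech}(x)+\widehat F_{\Atech}$.

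On the augmentation path the analogue is $D(t)=\Phi_{\HH}(x_t)+\frac{\rho}{r+\rho}F_{\HH}$. The requirement $D(t)\ge F_{\HH}$ then gives $\Phi_{\HH}(x_t)\ge\widehat F_{\HH}$.

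Your check that an $\Atech$ incumbent stays, $D-F_{\HH}\le-F_{\Atech}-F_{\HH}$, is the paper's temporary-deviation bound. Your reading of ``interior dominance boundaries'' as $\Phi_{\Atech}(1)>0$ and $\Phi_{\HH}(0)<0$ is the right one; these conditions are not automatic.
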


The proposition supplies the expectations component absent from equation~\eqref{eq:myopicdynamics}. From a state above the static tipping point, firms may automate because they expect later revisions to contract demand before their next chance to reorganize. From a state below the tipping point, they may rebuild because they expect the market to expand. The two outcomes begin with the same fundamentals and inherited workforce and differ only in the anticipated direction of later choices.

The fixed-wage benchmark makes the expectations interval explicit. It also gives the first-order approximation to any differentiable $G$ around its static threshold. In that benchmark,
\begin{equation}
    G(x)=\Gamma(x-x^*),
    \qquad \Gamma=\kappa\mu dK>0,
    \label{eq:affineG}
\end{equation}
and let $\Xi=(r+2\rho)/(r+\rho)$.

\begin{corollary}[Closed-form expectations region]
\label{cor:closedexpectations}
Under equation~\eqref{eq:affineG}, define
\begin{align}
    x_{\Atech}^{E}
    &=\Xi x^*-\frac{(r+2\rho)\widehat F_{\Atech}}{\Gamma},
    \label{eq:xAE}\\
    x_{\HH}^{E}
    &=1-\Xi(1-x^*)+\frac{(r+2\rho)\widehat F_{\HH}}{\Gamma}.
    \label{eq:xHE}
\end{align}
The automation path is an equilibrium exactly when $x\leq x_{\Atech}^{E}$, and the augmentation path is an equilibrium exactly when $x\geq x_{\HH}^{E}$. Their overlap has width
\begin{equation}
    x_{\Atech}^{E}-x_{\HH}^{E}
    =\frac{\rho}{r+\rho}
    -\frac{r(r+2\rho)(F_{\Atech}+F_{\HH})}
    {(r+\rho)\Gamma},
    \label{eq:expectwidth}
\end{equation}
which is positive if and only if
\begin{equation}
    F_{\Atech}+F_{\HH}<\frac{\Gamma\rho}{r(r+2\rho)}.
    \label{eq:Fcondition}
\end{equation}
The overlap straddles $x^*$ when
\begin{equation}
    F_{\Atech}<\frac{\Gamma\rho x^*}{r(r+2\rho)},
    \qquad
    F_{\HH}<\frac{\Gamma\rho(1-x^*)}{r(r+2\rho)}.
    \label{eq:straddle}
\end{equation}
\end{corollary}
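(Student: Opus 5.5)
We take Proposition~\ref{prop:expectations} as given: the automation path is an equilibrium iff $\Phi_{\Atech}(x)\le-\widehat F_{\Atech}$, and the augmentation path iff $\Phi_{\HH}(x)\ge\widehat F_{\HH}$. The corollary then reduces to computing $\Phi_{\Atech}$ and $\Phi_{\HH}$ in closed form under the affine gain~\eqref{eq:affineG} and solving two linear inequalities in $x$. The only integrals needed are
\[
\int_0^\infty e^{-(r+\rho)t}\,dt=\frac{1}{r+\rho},\qquad \int_0^\infty e^{-(r+2\rho)t}\,dt=\frac{1}{r+2\rho}.
\]

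For the automation path, substitute $G(xe^{-\rho t})=\Gamma(xe^{-\rho t}-x^*)$ into~\eqref{eq:PhiA}. This gives
\[
\Phi_{\Atech}(x)=\Gamma\left[\frac{x}{r+2\rho}-\frac{x^*}{r+\rho}\right].
\]
Multiplying the condition $\Phi_{\Atech}(x)\le-\widehat F_{\Atech}$ by $(r+2\rho)/\Gamma>0$ yields
\[
x\le \Xi x^*-\frac{(r+2\rho)\widehat F_{\Atech}}{\Gamma},
\]
which is~\eqref{eq:xAE}.

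For the augmentation path, write $G\bigl(1-(1-x)e^{-\rho t}\bigr)=\Gamma\bigl[(1-x^*)-(1-x)e^{-\rho t}\bigr]$. Then
\[
\Phi_{\HH}(x)=\Gamma\left[\frac{1-x^*}{r+\rho}-\frac{1-x}{r+2\rho}\right].
\]
The condition $\Phi_{\HH}(x)\ge\widehat F_{\HH}$ rearranges to $1-x\le\Xi(1-x^*)-(r+2\rho)\widehat F_{\HH}/\Gamma$, which is~\eqref{eq:xHE}. Both conditions are monotone in $x$, so each holds on the stated half-line.

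For the width, subtract the two thresholds. The $x^*$ terms give $\Xi x^*+\Xi(1-x^*)-1=\Xi-1=\rho/(r+\rho)$. The cost terms give $-(r+2\rho)(\widehat F_{\Atech}+\widehat F_{\HH})/\Gamma$. Substituting $\widehat F_j=rF_j/(r+\rho)$ delivers~\eqref{eq:expectwidth}. Requiring this expression to be positive and multiplying by $(r+\rho)\Gamma/r$ gives~\eqref{eq:Fcondition}.

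For straddling, impose $x_{\Atech}^E>x^*$ and $x_{\HH}^E<x^*$ separately:
\begin{align*}
x_{\Atech}^E-x^*&=(\Xi-1)x^*-\frac{(r+2\rho)rF_{\Atech}}{(r+\rho)\Gamma},\\
x^*-x_{\HH}^E&=(\Xi-1)(1-x^*)-\frac{(r+2\rho)rF_{\HH}}{(r+\rho)\Gamma}.
\end{align*}
Using $\Xi-1=\rho/(r+\rho)$ and cancelling the common factor $1/(r+\rho)$, positivity of each expression is equivalent to the corresponding inequality in~\eqref{eq:straddle}.

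The computations are routine. The main point of care is that the thresholds are read as half-lines in $\mathbb R$: the statement says "exactly when" without intersecting with $[0,1]$. If $x_{\Atech}^E>1$ or $x_{\HH}^E<0$, the conditions simply hold for all feasible $x$. A second check is that the affine form is valid along the whole path. In the fixed-wage benchmark $G$ is exactly affine in $x$ on $[0,1]$, since $M$ is linear in $L$ when $\gamma=0$. Both polar paths stay in $[0,1]$, so no truncation of $G$ arises. I would also confirm that $\Gamma=\kappa\mu dK$ equals $G'$, which follows from~\eqref{eq:Gprime} with $w'=0$.
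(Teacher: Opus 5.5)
Your proposal is correct and follows essentially the same route as the paper's proof. You substitute the affine gain into the two integrals to obtain the same closed forms for $\Phi_{\Atech}$ and $\Phi_{\HH}$, solve the linear conditions from Proposition~\ref{prop:expectations}, subtract the thresholds using $\Xi-1=\rho/(r+\rho)$ and $\widehat F_j=rF_j/(r+\rho)$, and compare each threshold with $x^*$; your version is more explicit and also shows that the straddle conditions are necessary as well as sufficient.
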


With zero switching costs,
\begin{equation}
    x_{\HH}^{E}=x^*-\frac{\rho}{r+\rho}(1-x^*),
    \qquad
    x_{\Atech}^{E}=x^*+\frac{\rho}{r+\rho}x^*.
    \label{eq:zeroFthresholds}
\end{equation}
As firms become myopic relative to the speed of aggregate adjustment, $r/\rho\to\infty$, these thresholds converge to $x^*$. When firms put substantial weight on the market that later revisers will create, the interval widens. Switching costs narrow it because an incumbent is reluctant to join a path that it would later have to reverse.

Proposition~\ref{prop:expectations} characterizes the two economically salient monotone paths, not every possible nonmonotone equilibrium. Its force is that flexible wages do not remove the expectations mechanism: any strictly increasing $G$ generates the two integral conditions, and the static threshold belongs to their overlap when switching costs are zero.

\subsection{Aggregate shocks and equilibrium selection}
\label{sec:selection}

Expectational multiplicity can be resolved when a public fundamental moves continuously between dominance regions. In this subsection switching costs are zero. A monotone Markov rational-expectations equilibrium is a common switching rule under which, at each Poisson revision opportunity, a firm chooses the mode that maximizes its expected discounted payoff, taking as given the public diffusion and the aggregate law of motion induced by that rule. Consider the nested affine game
\begin{equation}
    G(x,z)=\Gamma(x-x^*)+z,
    \qquad dz_t=\sigma\,dB_t,\quad \sigma>0.
    \label{eq:stochasticG}
\end{equation}
The shifter $z$ may represent profitability, local demand, or the relative capability of human oversight. The payoff difference is continuous and strictly increasing in both $x$ and $z$, and the Brownian fundamental reaches regions in which either mode is dominant. We impose the remaining boundary and regularity conditions of \citet{frankel2000resolving}, so the induced symmetric two-action population game lies in the class covered by their uniqueness result; for the fast-revision limit, we impose the corresponding conditions of \citet{burdzy2001fast}.

\begin{proposition}[Conditional stochastic selection]
\label{prop:stochasticselection}
In the environment just described, the result of \citet{frankel2000resolving} yields a unique monotone Markov state-contingent switching boundary for every $\sigma>0$. Conditional on the realized path of $z_t$, the equilibrium path of $x_t$ is unique. With finite revision frictions the boundary depends on $x$, so inherited history continues to matter. As $\rho\to\infty$, the result of \citet{burdzy2001fast} selects the risk-dominant action. The limiting boundary is
\begin{equation}
    z^{\mathrm{RD}}=\Gamma\left(x^*-\frac12\right).
    \label{eq:zRD}
\end{equation}
At $z=0$, human augmentation is selected if $x^*<1/2$, and automation is selected if $x^*>1/2$.
\end{proposition}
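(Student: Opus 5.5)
The argument is mostly a verification that the nested affine game satisfies the hypotheses of the two cited results, followed by a short risk-dominance calculation. I will check the Frankel--Pauzner hypotheses, invoke their uniqueness theorem, pass to the fast-revision limit with Burdzy--Frankel--Pauzner, and then compute the limiting boundary in closed form.

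\textbf{Uniqueness for fixed $\sigma>0$.} The state is $(x_t,z_t)$. Revision opportunities arrive at rate $\rho$, so $x$ moves at bounded speed: $\dot x=\rho(\mathbf 1\{\text{revisers choose }\HH\}-x)$. The fundamental $z$ is a driftless Brownian motion. The flow payoff gain $G(x,z)=\Gamma(x-x^*)+z$ is continuous and strictly increasing in both arguments, with $\Gamma>0$ by equation~\eqref{eq:affineG}, and it is Lipschitz. Dominance regions exist. For $z>\Gamma x^*$ we have $G>0$ for every $x\in[0,1]$, so augmentation is strictly dominant. For $z<-\Gamma(1-x^*)$ automation is strictly dominant. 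Brownian motion reaches both regions with probability one from any starting point.

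Together with the maintained boundary and regularity conditions, these facts place the game in the class covered by the Frankel--Pauzner theorem. That theorem gives a unique rational-expectations equilibrium. It takes the form of a nonincreasing switching curve $z=\zeta_\sigma(x)$: revisers choose $\HH$ iff $z>\zeta_\sigma(x)$. The proof is iterated elimination of dominated strategies from both dominance regions, using monotone comparison of discounted expected gains along translated paths. Given a realized path of $z_t$, the ODE for $x_t$ driven by this threshold rule has a unique solution. The reason is that the rule is monotone in $x$, and the crossing set has measure zero almost surely for a Brownian path. This gives the conditional uniqueness of $x_t$. Because $\rho<\infty$, the boundary $\zeta_\sigma$ genuinely depends on $x$, which is the history statement.

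\textbf{Risk-dominant limit and the formula for $z^{\mathrm{RD}}$.} As $\rho\to\infty$ with $r$ and $\sigma$ fixed, Burdzy--Frankel--Pauzner show that the switching boundary converges to a flat boundary. Its location is where the action that is a best response to a uniform belief over the opponent share flips, that is, where the risk-dominant action flips. For a population game with gain $G(x,z)$ this is the $z$ solving $\int_0^1 G(x,z)\,dx=0$.

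The intuition I would record is that a reviser sitting exactly on the limiting boundary faces a Laplacian belief: in the fast limit the aggregate share sweeps uniformly across $[0,1]$ before $z$ moves appreciably. Computing the integral gives
\[
\Gamma\left(\tfrac12-x^*\right)+z=0,
\qquad\text{so}\qquad
z^{\mathrm{RD}}=\Gamma\left(x^*-\tfrac12\right).
\]
At $z=0$, $\HH$ is selected iff $0>z^{\mathrm{RD}}$, which holds iff $x^*<1/2$; otherwise $\mathrm{A}$ is selected. This is the last claim of the proposition.

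\textbf{Main obstacle.} The delicate step is matching the payoff structure to the cited theorems. Frankel--Pauzner is stated for payoffs depending on the current action profile and the fundamental, with the payoff difference strictly monotone and bounded derivatives. Here the firm's discounted value depends on the whole future path. I would verify that only flow payoffs $G(x_t,z_t)$ enter, with no switching costs, and that the lock-in between revisions is exactly their Poisson-revision structure. The effective discount rate is then $r+\rho$, as in equation~\eqref{eq:PhiA}.

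I would also confirm the role of the linearity in $x$ in the limit step. Under the uniform-belief characterization the integral condition holds for any increasing $G$, so affinity is not needed for that; it only simplifies the integral to the closed form above.
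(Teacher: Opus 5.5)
Your proposal follows the paper's own route. You verify that $G(x,z)=\Gamma(x-x^*)+z$ is continuous and strictly increasing in both arguments, and you use recurrence of the Brownian fundamental to reach the dominance regions; your explicit thresholds $z>\Gamma x^*$ and $z<-\Gamma(1-x^*)$ are correct. You then invoke \citet{frankel2000resolving} and \citet{burdzy2001fast} and solve $\int_0^1G(x,z)\,dx=0$. Two of your side remarks need correcting, although neither affects the proposition as stated.

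First, you understate the role of affinity. The theorem of \citet{burdzy2001fast} concerns a continuum of players randomly matched in a symmetric $2\times2$ game. Linearity in $x$ is therefore what places the model inside that theorem, not merely what simplifies the integral. The paper's proof makes this step explicit: write $G(x,z)=x\,g_{\HH}(z)+(1-x)\,g_{\Atech}(z)$ with $g_{\HH}=\Gamma(1-x^*)+z$ and $g_{\Atech}=z-\Gamma x^*$. Risk dominance of $\HH$ is then $g_{\HH}+g_{\Atech}>0$, which equals $2\int_0^1G\,dx>0$, and that is where the integral criterion comes from. Your assertion that the criterion holds for any increasing $G$ is not delivered by the cited result. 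The heuristic that the share sweeps uniformly across $[0,1]$ does not substitute for it.

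Second, the reason you give for conditional uniqueness of $x_t$ points the wrong way. Since $\zeta_\sigma$ is nonincreasing, the rule $\mathbf 1\{z>\zeta_\sigma(x)\}$ is nondecreasing in $x$. For fixed $z$, the share therefore moves down on one side of the crossing point and up on the other. The discontinuity is repelling, and with $z$ frozen on the boundary there would be two solutions. What excludes branching is the oscillation of the Brownian path around the boundary. Measure-zero boundary time along a given solution does not by itself give pathwise uniqueness; one needs, for example, a local-time bound on the time $z$ spends in the band between two candidate solutions. The paper simply takes conditional uniqueness as part of the \citet{frankel2000resolving} result, and you should do the same rather than rest it on monotonicity.
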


In the affine game, the criterion follows from
\begin{equation}
    \int_0^1 G(x,z^{\mathrm{RD}})\,dx=0.
    \label{eq:potentialcriterion}
\end{equation}
It coincides with larger-basin selection under the myopic dynamics: augmentation's basin has length $1-x^*$ and automation's has length $x^*$. The stochastic result does not say that arbitrary noise selects the same outcome. It requires a public fundamental capable of reaching both dominance regions; at finite $\rho$ it yields a history-dependent switching boundary rather than a state-independent risk-dominance rule.

In the fixed-wage primitives, define
\begin{equation}
    d_{\mathrm{RD}}=\frac{\kappa q}{1+\kappa\mu/2}.
    \label{eq:dRD}
\end{equation}
Then $d_1<d_{\mathrm{RD}}<d_2$. Within the coordination region, augmentation is risk dominant for $d<d_{\mathrm{RD}}$ and automation is risk dominant for $d>d_{\mathrm{RD}}$. An improvement in unattended automation that lowers $\kappa$ can therefore reverse stochastic selection before the high-employment equilibrium disappears at $d_2$.

\section{Welfare and optimal policy}
\label{sec:welfare}

\subsection{The planner's allocation}

Profits do not exhaust the surplus generated by a service transaction. Let $u_j\geq0$ denote buyer surplus per unit of market demand in mode $j$, and define total nonlabor surplus
\begin{equation}
    a_j=b_j+u_j,
    \qquad
    \alpha=a_{\HH}-a_{\Atech}.
    \label{eq:totalsurpluscoeff}
\end{equation}
The maintained quality ordering is $u_{\HH}\geq u_{\Atech}$, so $\alpha\geq\kappa>0$. Human verification or customization may therefore benefit the buyer as well as the provider. Let
\begin{equation}
    A(x)=a_{\Atech}+\alpha x.
\end{equation}
Up to terms independent of $x$, total surplus is
\begin{equation}
    S(x)=A(x)M(x)-C(L(x)).
    \label{eq:welfare}
\end{equation}
This expression adds firm and buyer surplus to worker participation rents. Wage payments cancel as transfers, while $C(L)$ records the real opportunity cost of drawing specialized workers into the local sector. Throughout, the planner maximizes welfare in the modeled local economy: payments to the remote model supplier are outflows. This criterion coincides with global accounting when token prices equal marginal resource costs; otherwise global welfare additionally includes the supplier's surplus.

The planner's marginal return to an increase in augmentation is
\begin{equation}
    S'(x)=\alpha M(x)+A(x)M'(x)-d\ell w(L(x)).
    \label{eq:welfareprime}
\end{equation}
Comparing equation~\eqref{eq:welfareprime} with the private gain in equation~\eqref{eq:gain2} identifies the adoption externality:
\begin{equation}
    \mathcal E(x)=S'(x)-G(x)
    = (\alpha-\kappa)M(x)+A(x)M'(x)>0.
    \label{eq:externality}
\end{equation}
The first term is the quality surplus not captured in the firm's margin. The second is the surplus created throughout the service economy when an additional payroll expands the market.

Assume in this section that the wage bill $W(L)=w(L)L$ is convex, as it is under the linear and constant-elasticity schedules used below. Then $M'(x)>0$ and $M''(x)\geq0$. Combining equation~\eqref{eq:Gprime} with equation~\eqref{eq:welfareprime} gives
\begin{equation}
    S''(x)=G'(x)+(2\alpha-\kappa)M'(x)+A(x)M''(x)>0.
    \label{eq:welfareconvex}
\end{equation}
Welfare is convex in the production share, so the local first best is one of the two endpoint allocations.

\begin{proposition}[Local first best]
\label{prop:firstbest}
Suppose conditions~\eqref{ass:singlecrossing} and \eqref{ass:complementarity} hold, $W$ is convex, and $u_{\HH}\geq u_{\Atech}\geq0$. In the coordination region, the human-augmented allocation is the unique first best for the modeled local economy:
\begin{equation}
    S(1)>S(0).
    \label{eq:welfareranking}
\end{equation}
This ranking does not require condition~\eqref{ass:ranking}.
\end{proposition}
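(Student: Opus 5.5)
We prove $S(1)>S(0)$ by integrating the planner's marginal return along the adoption share, and then use convexity of $S$ to pin down uniqueness. The identity is
\begin{equation*}
    S(1)-S(0)=\int_0^1 S'(x)\,dx=\int_0^1\bigl[G(x)+\mathcal E(x)\bigr]dx ,
\end{equation*}
which follows from equation~\eqref{eq:externality}. Under conditions~\eqref{ass:singlecrossing} and \eqref{ass:complementarity}, the externality $\mathcal E(x)=(\alpha-\kappa)M(x)+A(x)M'(x)$ is strictly positive. Indeed, $\alpha\geq\kappa$, $A(x)\geq a_{\Atech}\geq b_{\Atech}$, and $M'(x)=\mu W'(L)d\ell>0$ because $W'(L)=w+Lw'\geq w>0$. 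The first term is nonnegative. For the second we need $A(x)>0$. This holds whenever $a_{\Atech}>0$, and in any case $A(x)\geq a_{\Atech}+\alpha x>0$ for $x>0$, so $\int_0^1 A M'\,dx>0$. Hence $S(1)-S(0)>\int_0^1 G(x)\,dx$, and it remains to control the private integral $\int_0^1 G$.

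The difficulty is that $\int_0^1 G$ can be negative in the coordination region: it is negative exactly when automation is risk dominant, i.e.\ $x^*>1/2$. So the pure inequality $\mathcal E>0$ is not enough, and we use convexity instead. By equation~\eqref{eq:welfareconvex}, $S$ is strictly convex, since $G'>0$, $2\alpha-\kappa>0$, $M'>0$, $M''\geq0$ and $A\geq0$. For a convex function,
\begin{equation*}
    S(1)-S(0)\geq S'(0).
\end{equation*}
This bound is also not decisive, because $S'(0)=G_{\Atech}+\mathcal E(0)$ and $G_{\Atech}<0$ in the coordination region. A better tangent comes from evaluating at the upper endpoint. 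Convexity gives $S(0)\geq S(1)-S'(1)$, which is the wrong direction. The useful object is the secant slope, so we compare it directly with the threshold. Let $x^*$ be the interior root of $G$ from Proposition~\ref{prop:flexclassification}. Since $S'=G+\mathcal E$ and $\mathcal E>0$, we have $S'(x)>0$ for all $x\geq x^*$, so $S(1)>S(x^*)$. On $[0,x^*]$ we need $S(x^*)\geq S(0)$, or more generally $S(1)\geq S(0)$ directly.

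The plan is therefore to rewrite $S$ in closed form and exploit the structure of $\mathcal E$. Integrate $A(x)M'(x)$ by parts:
\begin{equation*}
    \int_0^1 AM'\,dx=A(1)M_{\HH}-A(0)M_{\Atech}-\alpha\int_0^1 M\,dx .
\end{equation*}
Combined with $\int_0^1 G=\kappa\int_0^1 M-d\ell\int_0^1 w$, this yields
\begin{equation*}
    S(1)-S(0)=a_{\HH}M_{\HH}-a_{\Atech}M_{\Atech}-\bigl[C(\ell)-C(\ell(1-d))\bigr],
\end{equation*}
which is of course just the definition of $S$. We then bound this using the equilibrium condition $G_{\HH}>0$, i.e.\ $\kappa M_{\HH}>d\ell w_{\HH}$, together with $C(\ell)-C(\ell(1-d))\leq d\ell w_{\HH}$, since $w$ is increasing:
\begin{align*}
    S(1)-S(0)&\geq a_{\HH}M_{\HH}-a_{\Atech}M_{\Atech}-d\ell w_{\HH}\\
    &> a_{\HH}M_{\HH}-a_{\Atech}M_{\Atech}-\kappa M_{\HH}\\
    &=(a_{\HH}-\kappa)M_{\HH}-a_{\Atech}M_{\Atech}.
\end{align*}
Since $a_{\HH}-\kappa=b_{\Atech}+u_{\HH}\geq a_{\Atech}$ and $M_{\HH}>M_{\Atech}$, the final expression is at least $a_{\Atech}(M_{\HH}-M_{\Atech})\geq0$. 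This delivers strict $S(1)>S(0)$ without using condition~\eqref{ass:ranking}.

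Uniqueness of the first best then follows from strict convexity: the maximum over $[0,1]$ lies at an endpoint, and the endpoint comparison is strict. The step we expect to need care is the labor-cost bound, which relies on $w$ being nondecreasing, together with verifying that strictness survives when $a_{\Atech}=0$. In that case strictness comes from the inequality $G_{\HH}>0$ itself.
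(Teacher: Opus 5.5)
Your proof is correct, and its decisive step matches the paper's proof. You write $S(1)-S(0)=a_{\HH}M_{\HH}-a_{\Atech}M_{\Atech}-[C(\ell)-C(\ell(1-d))]$, bound the labor-cost term by $d\ell w_{\HH}<\kappa M_{\HH}$ using $G_{\HH}>0$ and monotone $w$, conclude via $\alpha\geq\kappa$ and $M_{\HH}>M_{\Atech}$, and get uniqueness from the convexity in equation~\eqref{eq:welfareconvex}. The earlier detours (integrating $S'=G+\mathcal E$, tangent bounds, integration by parts) are unnecessary and can be dropped; also, your remark that $\int_0^1G<0$ exactly when $x^*>1/2$ holds only when $G$ is affine, as in the fixed-wage benchmark.
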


The result survives the wage response that can overturn the ranking of downstream profits. The high-state incentive constraint implies $\kappa M_{\HH}>d\ell w_{\HH}$. Because the wage is nondecreasing,
\begin{equation}
    C(\ell)-C(\ell(1-d))\leq d\ell w_{\HH}.
\end{equation}
The private return to differentiation at the high endpoint therefore already covers the opportunity cost of the additional workers. Buyer surplus and the extra transactions generated by their wage income strengthen the inequality. Workers' positive employment rents are now part of the model rather than an auxiliary qualification.

\subsection{Pigouvian correction and the coordination bridge}

Let $s(x)$ be a payment to mode $\HH$ relative to mode $\Atech$; an equivalent instrument is a tax on automation. The state-contingent Pigouvian wedge is
\begin{equation}
    s^{P}(x)=\mathcal E(x)
    =(\alpha-\kappa)M(x)+A(x)M'(x).
    \label{eq:pigou}
\end{equation}
It is the unique relative wedge that makes the private mode gain $G+s^P$ equal the planner's marginal return $S'$. A tax on mode $\Atech$ implements the same relative incentive and is not paid at the human-augmented endpoint.

Pigouvian correction and equilibrium selection are distinct. Because $S$ is convex, the efficient endpoint may coexist with a locally optimal low state. If $S'(0)<0$, the Pigouvian wedge leaves an automated firm unwilling to be the first to rebuild a team, even though $S(1)>S(0)$. Let $x^S$ be the unique solution to $S'(x^S)=0$. Under a tie-break in favor of augmentation, the pointwise-minimal nonnegative bridge added to $s^P$ is
\begin{equation}
    b^P(x)=[-S'(x)]_+.
    \label{eq:welfarebridge}
\end{equation}
For strict convergence, add any $\varepsilon>0$ at states where $S'(x)\leq0$. If the bridge must instead be constant from an inherited state $x_0<x^S$ until $x^S$ is crossed, any $b_0>-S'(x_0)$ is sufficient, and its limiting minimum is $-S'(x_0)$. Since $S'$ is increasing, no larger bridge is needed later on the transition.

If the government seeks only to select the privately sustainable high equilibrium, without reproducing the planner's marginal incentives at off-path states, the least pointwise relative wedge is
\begin{equation}
    s^{\mathrm{sel}}(x)=[-G(x)]_+.
    \label{eq:selectionwedge}
\end{equation}
This expression uses the same augmentation-favoring tie-break; strict implementation adds an arbitrarily small $\varepsilon$ wherever $G(x)\leq0$. Under condition~\eqref{ass:complementarity}, a constant wedge $-G(x_0)+\varepsilon$ is sufficient from $x_0<x^*$ until the economy crosses $x^*$. Starting from universal automation, its limiting size is
\begin{equation}
    s^{\mathrm{sel}}(0)=d\ell w_{\Atech}-\kappa M_{\Atech}.
    \label{eq:bridgezero}
\end{equation}
The wedge can be implemented as procurement of human-verified services, a temporary tax on unattended deployment, or a subsidy tied to maintaining the team. A tax formulation avoids a fiscal payment at the implemented endpoint.

\begin{proposition}[Local-welfare correction and implementation]
\label{prop:policy}
In the coordination region, equation~\eqref{eq:pigou} decentralizes the planner's marginal incentive. Conditional on maintaining this Pigouvian correction, equation~\eqref{eq:welfarebridge} is the pointwise-minimal nonnegative additional wedge that induces a weakly monotone transition to the local first best under augmentation-favoring ties. A constant bridge starting from $x_0<x^S$ has limiting minimum $-S'(x_0)$. If equilibrium selection alone is the objective, equation~\eqref{eq:selectionwedge} is the pointwise-minimal nonnegative relative wedge under the same tie-break. Arbitrarily small strict increments deliver strict convergence in either case.
\end{proposition}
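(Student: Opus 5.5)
The proof has four parts, taken in the order of the statement: the Pigouvian wedge, the welfare bridge, the constant bridge, and the selection wedge.

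\emph{Pigouvian decentralization.} Adding a relative payment $s(x)$ to mode $\HH$ changes an atomistic firm's gain to $G(x)+s(x)$. The firm still takes $M(x)$ and $w(L(x))$ as given, so the requirement $G+s=S'$ holds at every $x$ exactly when $s=S'-G$. By equation~\eqref{eq:externality} this difference is $\mathcal E(x)=(\alpha-\kappa)M(x)+A(x)M'(x)$, which gives both existence and uniqueness of $s^P$. Because the wedge is relative, a tax of $s^P$ on mode $\Atech$ induces the same mode ranking. It also raises no revenue at $x=1$, where no firm automates.

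\emph{Welfare bridge.} Under $s^P$ plus a further nonnegative wedge $b(x)$, a revising firm weakly prefers $\HH$ at state $x$ if and only if $S'(x)+b(x)\geq0$, with ties broken toward augmentation. I would then argue in two steps.

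\begin{itemize}
\item \emph{Necessity.} A weakly monotone transition from any $x_0$ to $x=1$ must have revising firms choose $\HH$ at every state it passes through. Hence $b(x)\geq -S'(x)$ there, and together with $b\geq0$ this gives $b\geq[-S'(x)]_+=b^P(x)$ pointwise.
\item \emph{Sufficiency.} With $b=b^P$, every reviser chooses $\HH$ at every state. The law of motion is then $\dot x=\rho(1-x)$, which is weakly monotone and converges to $1$.
\end{itemize}

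Proposition~\ref{prop:firstbest} identifies $x=1$ as the local first best. Adding $\varepsilon$ wherever $S'\leq0$ turns the ties into strict preferences, which gives strict convergence.

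\emph{Constant bridge.} Equation~\eqref{eq:welfareconvex} gives $S''>0$, so $S'$ is strictly increasing. Consequently $x^S$ is unique and $S'(x)\geq S'(x_0)$ for all $x\geq x_0$. Along the monotone path from $x_0$, any constant $b_0>-S'(x_0)$ keeps $S'(x)+b_0>0$ up to $x^S$. Beyond $x^S$, $S'$ is positive on its own. The infimum of admissible constants is therefore $-S'(x_0)$, approached but not attained without the tie-break.

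\emph{Selection wedge.} I would repeat the same argument with $G$ in place of $S'$ and no Pigouvian term, giving $s^{\mathrm{sel}}=[-G]_+$. Condition~\eqref{ass:complementarity} makes $G$ strictly increasing, so a constant wedge $-G(x_0)+\varepsilon$ suffices from $x_0$ until the economy crosses $x^*$. At $x_0=0$ the limiting size is $-G(0)=d\ell w_{\Atech}-\kappa M_{\Atech}$, which is equation~\eqref{eq:bridgezero}. In the coordination region this quantity is positive because $G_{\Atech}<0$.

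The main obstacle is the necessity step. "Pointwise-minimal" has to mean minimal among wedges that induce the transition from \emph{every} inherited state, and the revision-based dynamics, with a measure-zero tie set, must not allow the path to skip a state where the wedge is smaller. I would handle this by noting that under the continuous flow $\dot x=\rho(1-x)$ every $x\in[x_0,1)$ is visited. If $b(\hat x)<-S'(\hat x)$ at some visited $\hat x$, then by continuity of $S'$ the revisers near $\hat x$ strictly choose $\Atech$. That yields $\dot x<0$ there, which contradicts weak monotonicity.
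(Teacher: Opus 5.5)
Your proposal is correct and follows the paper's proof. The steps match: the identity $G+s^P=S'$ from equation~\eqref{eq:externality}, the statewise condition $b\geq -S'(x)$ that yields $[-S'(x)]_+$, strict monotonicity of $S'$ from equation~\eqref{eq:welfareconvex} for the constant bridge, and the same argument with $G$ in place of $S'$ for $s^{\mathrm{sel}}$. The paper treats minimality purely statewise, so your dynamic necessity step is extra; as written, your continuity patch also needs $b$, not just $S'$, to be continuous (or at least upper semicontinuous), because a wedge that falls short only at isolated states is invisible to the flow $\dot x=\rho(1-x)$.
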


Forward-looking firms require a stronger stopping rule than the myopic model. Coordinating firms on the optimistic path permits support to be withdrawn after $x>x^*$. Eliminating the pessimistic polar path requires retaining it until the unsubsidized state exceeds $x_{\Atech}^{E}$ from Proposition~\ref{prop:expectations}. This separation between a corrective wedge and temporary supplementary support parallels \citet{alvarez2026technology}; the forward-looking model here determines the bridge's expectations-robust stopping state. In the stochastic vanishing-friction limit, a temporary change in history cannot overturn an automation-risk-dominant allocation once the switching boundary no longer depends on $x$; policy must instead shift the payoff fundamental, the demand feedback, or the relative return $\kappa$.

Outside the coordination region, optimal policy is governed by the endpoint local-welfare comparison rather than by multiplicity. If private automation is dominant but $S(1)>S(0)$, a permanent relative wedge is needed to sustain augmentation after support is withdrawn. If $S(0)\geq S(1)$, automation is also the local first best and no corrective wedge is warranted. This distinction prevents an implementation result from being mistaken for a welfare conclusion.

\section{Quantitative illustration}
\label{sec:quantitative}

The model's thresholds depend on four dimensionless objects in the fixed-wage benchmark: the displacement share $d$, the high-state market-to-payroll ratio $q=M_{\HH}/K$, the local demand feedback $\mu$, and the incremental margin $\kappa$. The exercise uses recognizable values to illustrate the scale of the coordination interval and the movement of its boundaries; it is not a structural estimate.

The 2022 Economic Census reports receipts of approximately $2.667$ trillion and payroll of $1.023$ trillion in professional, scientific, and technical services, giving $q=2.61$ \citep{uscensus2024ec2254basic}. Ratios among large exposed subsectors range from roughly $2.43$ in accounting to $3.15$ in advertising. The demand parameter is less directly observed. \citet{moretti2010local} estimates total local-employment responses that vary substantially with worker skill. Mapping a total multiplier $m$ into the model-equivalent feedback $\mu=1-1/m$ gives $0.51$, $0.61$, and $0.72$ for low, central, and high scenarios. This mapping disciplines the composite local feedback; it does not interpret Moretti's employment multiplier as a household marginal propensity to consume.

The coefficients $b_j$ are contribution margins per unit of the demand index. Aggregate markup magnitudes reported by \citet{deloecker2020rise} motivate incremental-margin scenarios $\kappa\in\{0.08,0.15,0.20\}$; these are transparent spreads between production modes rather than empirical estimates of that spread. Table~\ref{tab:calibration} combines the three inputs. The coordination interval is $[d_1,d_2]$, and $d_{\mathrm{RD}}$ divides it according to risk dominance.

\begin{table}[t]
\centering
\caption{Illustrative displacement thresholds}
\label{tab:calibration}
\begin{tabular}{lccccccc}
\toprule
Scenario & $q$ & $\mu$ & $\kappa$ & $d_1$ & $d_{\mathrm{RD}}$ & $d_2$ & $d_2-d_1$\\
\midrule
Low feedback    & 2.43 & 0.51 & 0.08 & 0.187 & 0.190 & 0.194 & 0.008\\
Central         & 2.61 & 0.61 & 0.15 & 0.359 & 0.374 & 0.392 & 0.033\\
High feedback   & 3.15 & 0.72 & 0.20 & 0.551 & 0.588 & 0.630 & 0.079\\
\bottomrule
\end{tabular}
\begin{minipage}{0.93\textwidth}
\footnotesize\emph{Notes:} $d_1=\kappa q/(1+\kappa\mu)$, $d_{\mathrm{RD}}=\kappa q/(1+\kappa\mu/2)$, and $d_2=\kappa q$. The scenarios combine sectoral revenue-to-payroll ratios, a model-equivalent mapping from local multipliers, and illustrative incremental margins; they are not confidence intervals.
\end{minipage}
\end{table}

Task exposure must be separated from realized displacement. \citet{eloundou2024gpts} define exposure by the share of tasks whose completion time could fall materially with language models; the measure includes augmentation and does not predict layoffs. Write $d=\varphi e$, where $e$ is exposure and $\varphi$ is the share realized through unattended substitution. Illustrative pairs $(e,\varphi)$ generate $d=0.04$, $0.16$, $0.375$, and $0.42$. Under the central parameters, the first two values imply unique augmentation. The $d=0.375$ high-autonomy scenario lies in the coordination interval and gives $x^*=0.519$; because it is just above $d_{\mathrm{RD}}=0.374$, automation is narrowly risk dominant. At $d=0.42$, automation is dominant even at the high-demand endpoint.

\subsection{Wage adjustment within the strategic-complementarity region}

For a transparent sensitivity exercise, let the wage curve be
\begin{equation}
    w(L)=w_{\HH}\left(\frac{L}{\ell}\right)^{\eta},
    \qquad \eta\geq0,
    \label{eq:isoelasticwage}
\end{equation}
and normalize $K=w_{\HH}\ell$. Holding the high-state ratio $q$ fixed gives
\begin{equation}
    \frac{M(x)}{K}=q-\mu+\mu
    \left(\frac{L(x)}{\ell}\right)^{1+\eta}.
    \label{eq:isodemand}
\end{equation}
The high-state cutoff remains $d_2=\kappa q$, while the low-state cutoff $d_1(\eta)$ solves
\begin{equation}
    \kappa\left[q-\mu+\mu(1-d)^{1+\eta}\right]
    =d(1-d)^{\eta}.
    \label{eq:isocutoff}
\end{equation}
For the isoelastic schedule, condition~\eqref{ass:complementarity} holds globally in $x$ if
\begin{equation}
    \kappa\mu(1+\eta)(1-d)>d\eta.
    \label{eq:isocomplementarity}
\end{equation}
The left side is evaluated at the low-employment endpoint, where complementarity is hardest to sustain. To maintain it throughout the central coordination interval, it is enough to impose equation~\eqref{eq:isocomplementarity} at $d_2$. When the denominator is positive, this requires
\begin{equation}
    \eta<\overline\eta
    \equiv
    \frac{\kappa\mu(1-d_2)}{d_2-\kappa\mu(1-d_2)}.
    \label{eq:etabound}
\end{equation}

\begin{table}[t]
\centering
\caption{Wage adjustment and the central coordination interval}
\label{tab:wagesensitivity}
\begin{tabular}{lrrrr}
\toprule
High-state inverse wage elasticity $\eta$ & 0.00 & 0.05 & 0.10 & 0.15\\
\midrule
Lower cutoff $d_1(\eta)$                 & 0.359 & 0.365 & 0.372 & 0.379\\
Upper cutoff $d_2$                       & 0.392 & 0.392 & 0.392 & 0.392\\
Coordination width                        & 0.033 & 0.027 & 0.020 & 0.012\\
\bottomrule
\end{tabular}
\begin{minipage}{0.93\textwidth}
\footnotesize\emph{Notes:} Central parameters are $q=2.61$, $\mu=0.61$, and $\kappa=0.15$. The lower cutoff is the root of equation~\eqref{eq:isocutoff}. Equation~\eqref{eq:etabound} gives $\overline\eta\simeq0.166$, so every reported column lies within the maintained strategic-complementarity region.
\end{minipage}
\end{table}

Table~\ref{tab:wagesensitivity} quantifies the stabilizing force emphasized in equation~\eqref{eq:Gprime} while remaining inside the domain of Proposition~\ref{prop:flexclassification}. Wage adjustment steadily raises the displacement needed to sustain the automated endpoint and compresses the coordination interval. The calculation also marks the boundary of the main mechanism: beyond $\overline\eta$, endpoint coexistence and global strategic complementarity are distinct properties, so Proposition~\ref{prop:flexclassification} no longer supplies the relevant global classification.

\section{Extensions}
\label{sec:extensions}

\subsection{Profit-financed demand}

The baseline local economy sets the owners' contemporaneous expenditure share in the modeled market to zero. Let workers instead spend share $m_w$ and owners share $m_c$ locally. Aggregate operating profit is
\begin{equation}
    \Pi(x)=\bigl[b_{\Atech}+\kappa x\bigr]M(x)-W(x).
\end{equation}
The household and owner budgets give
\begin{equation}
    M(x)=\overline M+m_wW(x)+m_c\Pi(x),
\end{equation}
and hence
\begin{equation}
    M(x)=
    \frac{\overline M+(m_w-m_c)W(x)}
    {1-m_c[b_{\Atech}+\kappa x]}.
    \label{eq:ownerrecycling}
\end{equation}
If $m_cb_{\HH}<1$ and $m_w\geq m_c$, then $M'(x)>0$. Owner spending changes the size of the feedback and introduces the familiar expenditure multiplier, but it preserves the strategic complementarity whenever the analogue of condition~\eqref{ass:complementarity} holds. The baseline $m_c=0$ is therefore a transparent incidence case rather than an accounting omission.

\subsection{Re-employment}

Let $\chi\in[0,1]$ be the share of earnings lost through automation that displaced workers replace quickly enough in other local jobs to support contemporaneous service demand. Holding the high-state market fixed, equation~\eqref{eq:marketx} becomes
\begin{equation}
    M_{\chi}(x)=\overline M+\mu\left[(1-\chi)W(x)+\chi W_{\HH}\right].
    \label{eq:reemployment2}
\end{equation}
Consequently,
\begin{equation}
    M_{\chi}(1)-M_{\chi}(0)
    =(1-\chi)\mu\bigl(W_{\HH}-W_{\Atech}\bigr).
\end{equation}
Re-employment leaves the high endpoint unchanged and compresses the distance between the endpoint markets. As $\chi$ rises, the complementarity condition tightens and the coordination interval contracts. Rapid local reallocation can therefore remove the demand channel; migration or re-employment outside the region does not, because the associated expenditure leaves the market faced by local firms.

\subsection{Offensive and defensive automation}

The single-crossing condition need not hold in every service line. Divide the economy into defensive sectors $D$ and offensive sectors $O$, with masses $\omega$ and $1-\omega$. Employment is
\begin{equation}
    L=\omega\ell_D(1-d_D+d_Dx_D)
      +(1-\omega)\ell_O(1-d_O+d_Ox_O),
    \label{eq:twosectoremployment}
\end{equation}
and both groups face $M=\overline M+\mu w(L)L$. Let $\kappa_D>0$, while $\kappa_O\leq0$ captures an offensive application in which unattended automation has at least as large an incremental return to the market as human augmentation. The gain in a defensive sector is
\begin{equation}
    G_D=\kappa_D M-d_D\ell_Dw(L).
\end{equation}
Its response to augmentation in the offensive sector is
\begin{equation}
    \frac{\partial G_D}{\partial x_O}
    =(1-\omega)d_O\ell_O
    \left\{\kappa_D\mu[w(L)+Lw'(L)]
    -d_D\ell_Dw'(L)\right\}.
    \label{eq:crosssector}
\end{equation}
Whenever the bracket is positive, offensive automation lowers the profitability of augmentation in defensive sectors through the common labor-income market. A capability improvement can therefore begin where automation expands productive opportunity and continue as a defensive cascade in sectors for which the direct technological case is weaker. The aggregate transition need not share a single interpretation even though its demand propagation is common.

\subsection{Heterogeneous firms}

Let firm $i$ have an additional augmentation cost $\varepsilon_i$ with continuous distribution $F$. Its mode gain is $G(x)-\varepsilon_i$, so the human-augmented share satisfies
\begin{equation}
    x=F(G(x)).
    \label{eq:heterogeneous}
\end{equation}
Condition~\eqref{ass:complementarity} makes the right side increasing. Because the slope of the fixed-point map is $f(G(x))G'(x)$, a necessary condition for multiple interior fixed points is
\[
    \sup_{x\in[0,1]} f(G(x))G'(x)>1.
\]
When the slope exceeds one over an intermediate range and a location shifter generates the required boundary crossings, the model can have three fixed points. Heterogeneity smooths the synchronized threshold, but the product of the density of firms near indifference and the demand-feedback slope continues to govern multiplicity.

\section{Conclusion}

Generative AI changes the downstream economics of fixed costs. The firm using an API need not finance a model's training capacity; it rents tokens when customers arrive. A human-augmented service still rests on a team that must be organized and paid. When that team produces a differentiated return in a large market and its wages support demand elsewhere, production mode becomes a coordination choice rather than an isolated cost comparison.

The mechanism survives a market-clearing wage. Falling wages stabilize human augmentation, but they also reduce wage-financed demand. A simple inequality determines which effect dominates. When the demand effect is stronger, a low-demand automated equilibrium and a high-demand human-augmented equilibrium coexist. The latter is the local first best once buyer surplus and workers' participation rents are counted. Flexible wages can change whether every downstream owner prefers that allocation, but they do not erase its welfare ranking.

Forward-looking adjustment makes the transition genuinely self-fulfilling. Over an interval of inherited production shares, expectations of later automation validate an automation cascade, while expectations of team rebuilding validate recovery from the same state. Public shocks can select a unique state-contingent path, and vanishing adjustment frictions select by risk dominance. Capability improvements can consequently change the selected equilibrium before augmentation ceases to be technologically viable.

The policy implication is not a generic case for preserving every job. In defensive service lines where the high endpoint is efficient, the planner corrects the demand and quality wedge and, when necessary, supplies a temporary bridge across the low local optimum. A bridge that merely crosses the myopic threshold is less robust than one retained until pessimistic expectations no longer support an automation path. In offensive applications or parameter regions where automation is also socially preferred, the correction vanishes. The relevant question is whether metered substitution is privately attractive because it creates more surplus or because widespread substitution has made every firm's market too small to choose otherwise.

\appendix

\section{Proofs}

\begin{proof}[Proof of Proposition~\ref{prop:flexclassification}]
Equation~\eqref{eq:Gprime} and condition~\eqref{ass:complementarity} imply that $G$ is strictly increasing on $[0,1]$. If $G(0)>0$, mode $\HH$ is strictly optimal at every attainable state and $x=1$ is the unique equilibrium. If $G(1)<0$, mode $\Atech$ is strictly optimal at every state and $x=0$ is unique. If $G(0)<0<G(1)$, both endpoints satisfy individual optimality. Continuity and strict monotonicity give a unique $x^*\in(0,1)$ with $G(x^*)=0$; a measure $x^*$ of indifferent firms can support that allocation. Under myopic Poisson revision, $G(x)<0$ below $x^*$ and $G(x)>0$ above it, which yields equation~\eqref{eq:myopicdynamics} and the stated stability properties.
\end{proof}

\begin{proof}[Derivation of equation~\eqref{eq:flexxstar}]
Under equation~\eqref{eq:linearwage}, write the gain as a function of employment:
\begin{equation*}
    G(L)=\kappa\mu\gamma L^2
    +(\kappa\mu w_0-d\gamma\ell)L
    +\kappa\overline M-d\ell w_0.
\end{equation*}
The relevant quadratic root is the expression for $L^*$ in equation~\eqref{eq:flexxstar}; condition~\eqref{eq:linearSC} makes it the unique root in $[\ell(1-d),\ell]$. Inverting equation~\eqref{eq:employment} gives the expression for $x^*$.
\end{proof}

\begin{proof}[Derivation of equation~\eqref{eq:rigidthresholds}]
With $w=w_0$ and $K=w_0\ell$,
\begin{equation*}
    M_{\HH}=qK,
    \qquad
    M_{\Atech}=K(q-\mu d).
\end{equation*}
The high-state boundary $G_{\HH}=0$ gives $d_2=\kappa q$. The low-state boundary $G_{\Atech}=0$ gives $\kappa(q-\mu d)=d$, or $d_1=\kappa q/(1+\kappa\mu)$.
\end{proof}

\begin{proof}[Proof of Proposition~\ref{prop:firmranking}]
The high-state incentive constraint gives
\begin{equation*}
    \profit_{\HH}(M_{\HH},w_{\HH})>
    \profit_{\Atech}(M_{\HH},w_{\HH}).
\end{equation*}
Moreover,
\begin{align*}
    &\profit_{\Atech}(M_{\HH},w_{\HH})
      -\profit_{\Atech}(M_{\Atech},w_{\Atech})\\
    &\qquad=b_{\Atech}(M_{\HH}-M_{\Atech})
      -(1-d)\ell(w_{\HH}-w_{\Atech})\\
    &\qquad=b_{\Atech}\mu\Delta W-(1-d)\ell\Delta w>0
\end{align*}
by condition~\eqref{ass:ranking}. Transitivity proves equation~\eqref{eq:flexprofitranking}. If $d=1$, the residual-wage term is zero.
\end{proof}

\begin{proof}[Proof of Proposition~\ref{prop:expectations}]
Consider the candidate automation path. Let $T\sim\operatorname{Exp}(\rho)$ be the time until a firm's next revision. The expected discounted flow advantage of retaining $\HH$ until $T$ is
\begin{align*}
    \E\left[\int_0^T e^{-rt}G(xe^{-\rho t})\,dt\right]
    &=\int_0^\infty e^{-(r+\rho)t}G(xe^{-\rho t})\,dt\\
    &=\Phi_{\Atech}(x).
\end{align*}
Switching immediately costs $F_{\Atech}$; waiting entails the discounted expected cost $\E[e^{-rT}]F_{\Atech}=\rho F_{\Atech}/(r+\rho)$. Thus retaining $\HH$ rather than switching immediately is worth $\Phi_{\Atech}(x)+\widehat F_{\Atech}$. Immediate automation is optimal exactly under equation~\eqref{eq:Apathcondition}. An incumbent already in $\Atech$ cannot gain by switching temporarily to $\HH$: the deviation is worth
\begin{equation*}
    \Phi_{\Atech}(x)-F_{\HH}-\frac{\rho}{r+\rho}F_{\Atech}
    \leq-F_{\HH}-F_{\Atech}<0.
\end{equation*}
Monotonicity of $G$ implies that the automation condition continues to hold as $x_t$ falls.

On the candidate augmentation path, the expected discounted flow gain from switching immediately to $\HH$ rather than waiting until the next revision is $\Phi_{\HH}(x)-\widehat F_{\HH}$. This proves equation~\eqref{eq:Hpathcondition}; the symmetric temporary-deviation argument covers a firm already in $\HH$, and the condition remains satisfied as $x_t$ rises.

Both $\Phi$ functions are strictly increasing. With zero switching costs, $G(x^*)=0$ and strict monotonicity imply $\Phi_{\Atech}(x^*)<0<\Phi_{\HH}(x^*)$. Interior dominance boundaries therefore give $x_{\HH}^{E}<x^*<x_{\Atech}^{E}$. Continuity preserves a nonempty overlap for sufficiently small positive switching costs.
\end{proof}

\begin{proof}[Proof of Corollary~\ref{cor:closedexpectations}]
Substituting equation~\eqref{eq:affineG} into equations~\eqref{eq:PhiA}--\eqref{eq:PhiH} gives
\begin{align*}
    \Phi_{\Atech}(x)
    &=\Gamma\left(\frac{x}{r+2\rho}-\frac{x^*}{r+\rho}\right),\\
    \Phi_{\HH}(x)
    &=\Gamma\left(\frac{1-x^*}{r+\rho}
      -\frac{1-x}{r+2\rho}\right).
\end{align*}
Solving $\Phi_{\Atech}=-\widehat F_{\Atech}$ and $\Phi_{\HH}=\widehat F_{\HH}$ yields equations~\eqref{eq:xAE}--\eqref{eq:xHE}. Subtracting gives equation~\eqref{eq:expectwidth}; its positivity is equivalent to equation~\eqref{eq:Fcondition}. Comparing each threshold with $x^*$ gives equation~\eqref{eq:straddle}.
\end{proof}

\begin{proof}[Proof of Proposition~\ref{prop:firstbest}]
Differentiating equation~\eqref{eq:welfare} gives equation~\eqref{eq:welfareprime}. Since $L'=d\ell$ and $M'=\mu W'(L)L'$, subtracting $G$ gives equation~\eqref{eq:externality}. A second differentiation and use of $G'=\kappa M'-w'(L)(d\ell)^2$ yields equation~\eqref{eq:welfareconvex}. Hence the local first best is an endpoint.

The difference between endpoint welfare levels is
\begin{equation*}
    S(1)-S(0)=\alpha M_{\HH}
    +a_{\Atech}(M_{\HH}-M_{\Atech})
    -\bigl[C(\ell)-C(\ell(1-d))\bigr].
\end{equation*}
In the coordination region, $G_{\HH}>0$ implies $\kappa M_{\HH}>d\ell w_{\HH}$. Because $\alpha\geq\kappa$, $a_{\Atech}>0$, $M_{\HH}>M_{\Atech}$, and
\begin{equation*}
    C(\ell)-C(\ell(1-d))\leq d\ell w_{\HH},
\end{equation*}
the welfare difference is strictly positive.
\end{proof}

\begin{proof}[Proof of Proposition~\ref{prop:policy}]
Equation~\eqref{eq:externality} gives $G+s^P=S'$. At any state $x$, a nonnegative bridge $b$ added to the Pigouvian correction induces a weak preference for $\HH$ exactly when $b\geq-S'(x)$. Under augmentation-favoring ties, the smallest such bridge is therefore $[-S'(x)]_+$. Adding any $\varepsilon>0$ when $S'(x)\leq0$ makes the preference strict. Equation~\eqref{eq:welfareconvex} makes $S'$ strictly increasing. Hence, if $x_0<x^S$, a constant bridge $b_0>-S'(x_0)$ remains sufficient until $x^S$ is crossed, and the infimum over such constants is $-S'(x_0)$. If the Pigouvian correction is not imposed and equilibrium selection alone is the objective, the corresponding statewise condition is $s(x)\geq-G(x)$; intersecting it with nonnegativity gives $s^{\mathrm{sel}}(x)=[-G(x)]_+$. These comparisons establish each of the proposition's conditional minimality claims.
\end{proof}

\begin{proof}[Mapping to the selection results]
Equation~\eqref{eq:stochasticG} verifies continuity and strict monotonicity in both the population share and the public fundamental. The unbounded diffusion reaches the two dominance regions. Together with the boundary and regularity conditions stated in the text, these properties permit application of the uniqueness result in \citet{frankel2000resolving}. The affine population game is generated by a symmetric two-action coordination game; under the corresponding fast-revision conditions, \citet{burdzy2001fast} select its risk-dominant action. The indifference boundary solves
\begin{equation*}
    0=\int_0^1[\Gamma(x-x^*)+z^{\mathrm{RD}}]\,dx
    =\Gamma\left(\frac12-x^*\right)+z^{\mathrm{RD}},
\end{equation*}
which proves equation~\eqref{eq:zRD}. In the fixed-wage benchmark, setting $x^*=1/2$ in $G(x^*)=0$ yields equation~\eqref{eq:dRD}. The inequalities $d_1<d_{\mathrm{RD}}<d_2$ follow because $1+\kappa\mu>1+\kappa\mu/2>1$.
\end{proof}

\bibliographystyle{apalike}
\bibliography{v3_references}

\end{document}